\newtheorem{theorem}{Theorem}
\theoremstyle{plain}
\begin{document}
\bibliographystyle{IEEE2}
\title{Optimal Pricing-Based Edge Computing Resource Management in Mobile Blockchain}
\author{Zehui Xiong$^1$, Shaohan Feng$^1$, Dusit Niyato$^1$, Ping Wang$^1$, and Zhu Han$^2$	\\
$^1$School of Computer Science and Engineering, Nanyang Technological University (NTU), Singapore	\\
$^2$Department of Electrical and Computer Engineering, University of Houston, Texas, USA  }
\maketitle
\begin{abstract}
As the core issue of blockchain, the mining requires solving a proof-of-work puzzle, which is resource expensive to implement in mobile devices due to high computing power needed. Thus, the development of blockchain in mobile applications is restricted. In this paper, we consider the edge computing as the network enabler for mobile blockchain. In particular, we study optimal pricing-based edge computing resource management to support mobile blockchain applications where the mining process can be offloaded to an Edge computing Service Provider (ESP). We adopt a two-stage Stackelberg game to jointly maximize the profit of the ESP and the individual utilities of different miners. In Stage~I, the ESP sets the price of edge computing services. In Stage~II, the miners decide on the service demand to purchase based on the observed prices. We apply the backward induction to analyze the sub-game perfect equilibrium in each stage for uniform and discriminatory pricing schemes. Further, the existence and uniqueness of Stackelberg game are validated for both pricing schemes.
At last, the performance evaluation shows that the ESP intends to set the maximum possible value as the optimal price for profit maximization under uniform pricing. In addition, the discriminatory pricing helps the ESP encourage higher total service demand from miners and achieve greater profit correspondingly.
\end{abstract}
\begin{IEEEkeywords}
edge computing, resource management, mobile blockchain, mining, game theory, Variational Inequality.
\end{IEEEkeywords}
\section{Introduction}\label{Sec:Introduction}
In traditional online payments with digital transactions, the consensus is reached through a trusted central authority. Such introduced intermediary increases the cost because the nominal fees need to be deducted and paid. In 2008, a purely peer-to-peer electronic payment concept called \textit{Bitcoin} was proposed that avoids this incurred cost caused by online payments~\cite{Bitcoin}. As one popular digital cryptocurrency, Bitcoin can record all digital transactions in an append-only distributed public ledger called \textit{blockchain}, which is maintained by a group of participants, i.e., \textit{miners}. Since the success of Bitcoin~\cite{coindesk}, the blockchain technologies have generated remarkable public interests through a distributed network without intermediary. 
The key issue of the blockchain is a computational process called \textit{mining}, where the transaction records are appended into the main chain through the solution of the \textit{proof-of-work puzzle}. This proof-of-work puzzle consists of considering a set of transactions that are present in the network, solving a mathematical problem which depends on this set and propagating the result to the blockchain network for this solution to reach consensus. Once all these steps are finished successfully, the set of transactions proposed by the miner forms a \textit{block} that is appended to the current blockchain. The first miner which successfully mines the solution of the puzzle and reach the consensus is considered to be the winner to which a certain reward is offered. This process can be referred as the speed game among the miners with different computing capacities~\cite{pass2017fruitchains, conti2017survey}.


However, blockchain has not been adopted widely in mobile applications~\cite{M-commerce}. This is because blockchain mining needs to solve a proof-of-work puzzle, which is resource expensive to implement in mobile devices due to high computing power needed. Therefore, this motivates us to take a step further to reconsider the mining strategies as well as resource management in mobile environment, thereby opening new opportunities for the development of blockchain in mobile applications. In this paper, we consider the optimal pricing-based resource management in mobile blockchain, where an Edge computing Service Provider (ESP) is introduced to support proof-of-work puzzle offloading~\cite{Zhang2016}. In particular, we analyze two pricing schemes, i.e., uniform pricing in which a uniform unit price is applied to all the miners and discriminatory pricing in which different unit prices are assigned to different miners. 
To the best of our knowledge, this is the first work to investigate the mobile blockchain with resource management using game theory. The main contributions of this work are summarized as follows.
\begin{itemize}
\item We formulate a pricing and service demand problem to analyze the interplay between the ESP and miners. In particular, we adopt the two-stage Stackelberg game to model their interplay by jointly maximizing the profit of the ESP and the individual utilities of different miners for mobile blockchain applications.
\item Through backward induction, we first derive a unique Nash equilibrium point among the miners in the second stage, and then investigate the profit maximization of the ESP in the first stage. The existence and uniqueness of the Stackelberg equilibrium are validated analytically for both pricing schemes.
\item We conduct simulations to evaluate the performance of the proposed pricing-based resource management in mobile blockchain. The results show that the ESP intends to set the maximum possible value as the optimal price for profit maximization under uniform pricing. In addition, the discriminatory pricing helps the ESP encourage more service demand from the miners and achieve greater profit.
\end{itemize}

The rest of the paper is organized as follows. Section~\ref{Sec:Model} describes the system model and formulate the two-stage Stackelberg game. Section~\ref{Sec:Solution} analyzes the optimal service demand of miners as well as the profit maximization of the ESP using backward induction for both uniform and discriminatory pricing schemes. The performance evaluations is presented in Section~\ref{Sec:Simulation}, and Section~\ref{Sec:Conclusion} concludes the paper.

\section{System Model and Game Formulation}\label{Sec:Model}
In this section, we first present the system model of the mobile blockchain under our consideration. Then, we formulate the Stackelberg game setting for the pricing-based edge computing resource management for mobile blockchain applications.
\begin{figure}[t]
\centering
\includegraphics[width=.32\textwidth]{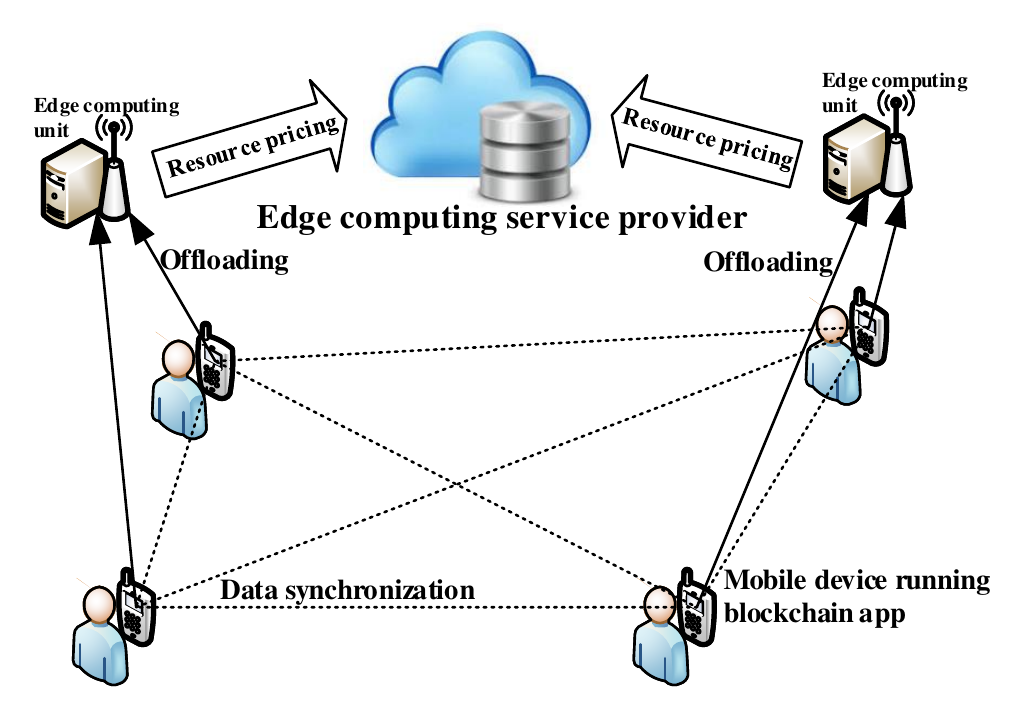}
\caption{System model.}\label{Fig:Model}
\end{figure}
\subsection{Mobile Blockchain Mining with Edge Computing}\label{Subsec:ChainMining}
Figure.~\ref{Fig:Model} illustrates the system model of the mobile blockchain under our consideration. We consider a mobile blockchain application as presented in~\cite{suankaewmanee20xx}, in which there is a group of miners, i.e., the mobile users, ${\mathcal{N}} = \{1,\ldots,N\}$. Each mobile user runs mobile blockchain applications for recording the transactions performed in the group. 
There is an ESP deploying the edge computing units for the miners. The aforementioned proof-of-work puzzle can be offloaded to a nearby edge computing unit. 

The ESP, i.e., the seller, sells the edge computing services, and the miners, i.e., the buyers, access this service from the nearby edge computing unit. Each miner~$i \in {\cal N}$ decides on its service demand, denoted by $x_i$. Additionally, we consider $x_i \in [\underline x ,\overline x]$, in which $\underline x$ is the minimum service demand, e.g., for blockchain data synchronization, and $\overline x$ is the maximum service demand governed by the ESP. Then, let $\mathbf{x} \buildrel \Delta \over = ({x_1}, \ldots ,{x_N})$ and ${\mathbf{x}}_{-i}$ represent the service demand profile of all the miners and all other miners except miner~$i$, respectively. As such, miner $i \in \cal N$ with the service demand $x_i$ has a relative computing power $\alpha_i$ with respect to the total computing power of the network, which is defined as follows:

\begin{equation}\label{Eq:HashingPower}
{\alpha_i}(x_i, {\bf x}_{-i}) = \frac{{{x_i}}}{{\sum\nolimits_{j\in \cal N} {{x_j}} }}, \alpha_i >0,
\end{equation}
such that ${\sum\nolimits_{j\in \cal N} {{\alpha_j}} }=1$.

In the mobile blockchain, miners compete against each other in order to solve the proof-of-work puzzle and receive the mining reward accordingly. The occurrence of solving the puzzle can be modeled as a random variable following a Poisson process with the mean value $\lambda$~\cite{houy2014bitcoin}. 
Once the miner successfully solves the puzzle, the miner needs to propagate its solution to the whole mobile blockchain network and its solution needs to reach consensus. 
The first miner to successfully mine a block that reaches consensus earns the mining reward. The reward is composed of a fixed reward denoted by $R$, and a variable reward which is defined as $r\times t_i$, where $r$ represents a given variable reward factor and $t_i$ represents the number of transactions included in the block mined by miner $i$~\cite{houy2014bitcoin, houy2014economics}. Additionally, the process of solving the puzzle incurs an associated cost, i.e., the payment from miner $i$ to the ESP, $p_i$. The objective of the miners is to maximize their individual expected utility, and for miner $i$, it is given as follows:
\begin{equation}
u_i = (R + r{t_i} )P_i\left(\alpha_i({x_i},{{\bf{x}}_{ - i}}), t_i\right) - {p_i}{x_i},
\end{equation}
where $P\left(\alpha_i ({x_i},{{\bf{x}}_{ - i}}), t_i\right)$ is the probability that miner $i$ successfully mines the block and its solution reaches consensus.

The process of successfully mining a block is composed of two phases, i.e., the mining phase and the propagation phase. In the mining phase, the probability that miner $i$ mines the block is proportional to its relative computing power $\alpha_i$. Furthermore, there are diminishing chances of wining if one miner chooses to propagate a block that propagates slowly to other miners in the propagation phase. In other words, even though one miner may mine the first valid block successfully, if its mined block is large, then this block will be likely to be discarded because of long latency, which is called orphaning~\cite{houy2014bitcoin}. Therefore, the probability of successful mining by miner $i$ is discounted by the chances that the mined block is orphaned, ${\mathbb P}_{\mathrm{orphan}}(t_i)$, which is represented by
\begin{equation}
P_i(\alpha_i ( x_i, {\mathbf{x}}_{-i} ), t_i ) = {\alpha_i}(1 - {{\mathbb P}_{\mathrm{orphan}}}(t_i)).
\end{equation}
Due to the fact that block mining times follow the Poisson distribution aforementioned, the orphaning probability can be approximated as~\cite{Approximation}:
\begin{equation}
{\mathbb P}_{\mathrm{orphan}}(t_i) = 1 - {e^{ - \lambda \tau(t_i) }},
\end{equation}
where $\tau(t_i)$ is the block propagation time, which is a function of the block size. In other words, the propagation time needed for a block to reach consensus is dependent on its size $t_i$, i.e., the number of transactions in it~\cite{decker2013information}. Same as~\cite{houy2014bitcoin}, we assume this time function is linear, i.e., $\tau(t_i) = z \times t_i$ with $z>0$ denotes a given delay factor\footnote{Note that this linear approximation is acceptable according to the numerical results from~\cite{houy2014bitcoin, andrew2015exam}.}. Thus, the probability that miner $i$ successfully mines a block and its solution reaches consensus is expressed as follows:
\begin{equation}
{P_i}(\alpha_i ({x_i},{{\bf{x}}_{ - i}}),t_i) = {\alpha_i}{e^{ -\lambda z{t_i}}},
\end{equation}
where ${\alpha_i}(x_i, {\bf x}_{-i})$ is shown in~(\ref{Eq:HashingPower}).

\subsection{Two-Stage Stackelberg Game Formulation}\label{Subsec:GameModel}
\begin{figure}[t]
\centering
\includegraphics[width=0.4\textwidth]{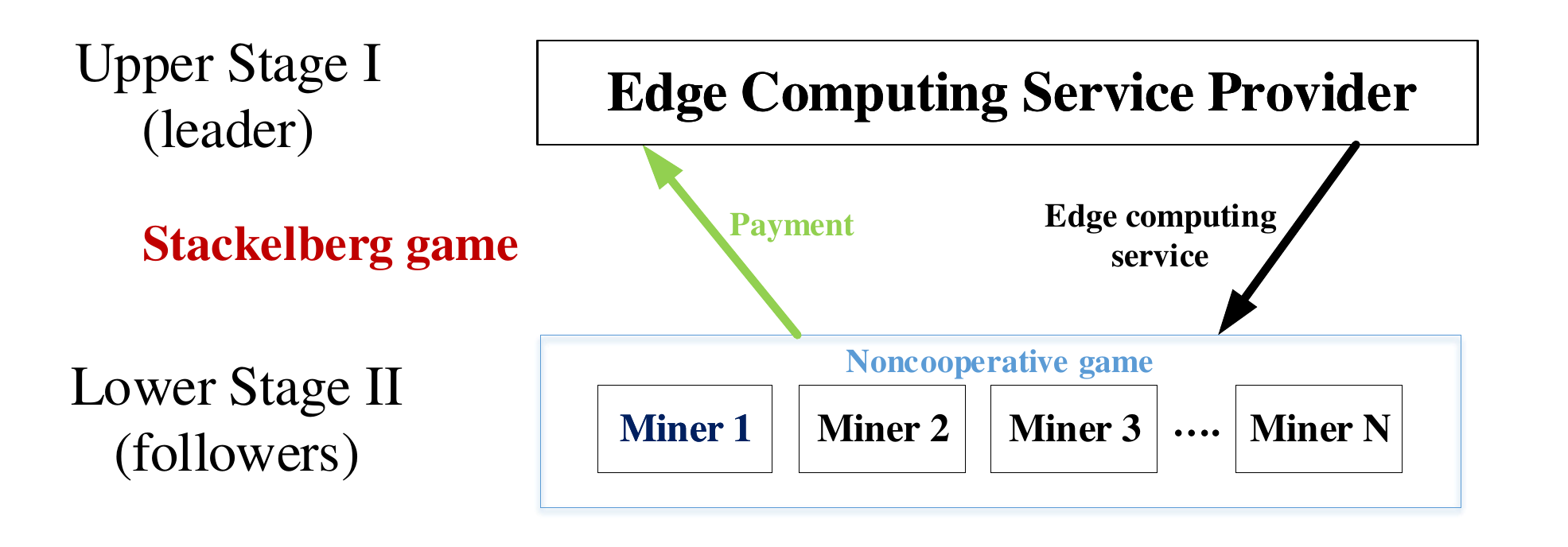}
\caption{Two-stage Stackelberg game model of the interactions among the ESP and miners in the mobile blockchain.}\label{Fig:GameModel}
\end{figure}
The interaction between the ESP and miners is modeled as a two-stage Stackelberg game, as illustrated in Fig.~\ref{Fig:GameModel}. 
\subsubsection{Miners' mining strategies in Stage II}
Given the prices of the ESP and other miners' strategies, miner $i$ decides on its computing service demand to maximize the expected utility which is given as:
\begin{equation}\label{Eq:Utility}
u_i({x_i},{{\bf{x}}_{ - i}},{p_i}) = (R + r{t_i})\frac{{{x_i}}}{{\sum\nolimits_{j\in \cal N} {{x_j}} }}{e^{ - \lambda z{t_i}}} - {p_i}{x_i},
\end{equation}
where $p_i$ is the price per unit for service demand of miner $i$. 

\subsubsection{ESP's pricing strategies in Stage I}
The profit of the ESP is the revenue obtained from charging the miners for computing service minus the service cost. The service cost depends on the service demand $x_i$, the time that the miner takes to mine a block, and the cost of electricity, $c$. Therefore, the ESP determines the prices within the strategy space $\{{\bf p}= [p_i]_{i\in \cal N}:0\le p_i \le \overline p\}$ for maximizing its profit which is defined as:
\begin{equation}\label{Eq:Profit}
\Pi ( {\mathbf{p}}, {\mathbf{x}} ) = \sum\nolimits_{i \in \cal N}{p_i}{x_i} - \sum\nolimits_{i \in \cal N}cT{x_i}.
\end{equation}
%
\begin{figure*}[ht]\scriptsize
\begin{eqnarray}\label{Eq:BestResponse}
{x_i}^* = \mathscr F_i ({{\bf x}}) = \begin{cases}
\underline x, & \sqrt {\frac{{(R + r{t_i})\sum\nolimits_{i \ne j} {{x_j}} }}{{p{e^{ - \lambda z{t_i}}}}}} - \sum\nolimits_{i \ne j} {{x_j}}< \underline x \cr \sqrt {\frac{{(R + r{t_i})\sum\nolimits_{i \ne j} {{x_j}} }}{{p{e^{ - \lambda z{t_i}}}}}} - \sum\nolimits_{i \ne j} {{x_j}}, &\underline x \le \sqrt {\frac{{(R + r{t_i})\sum\nolimits_{i \ne j} {{x_j}} }}{{p{e^{ - \lambda z{t_i}}}}}} - \sum\nolimits_{i \ne j} {{x_j}} \le \overline x \cr \overline x, &\sqrt {\frac{{(R + r{t_i})\sum\nolimits_{i \ne j} {{x_j}} }}{{p{e^{ - \lambda z{t_i}}}}}} - \sum\nolimits_{i \ne j} {{x_j}}> \overline x\end{cases}.
\end{eqnarray}
\hrulefill
\end{figure*}
Note that the same or different prices can be applied to the miners, which can be referred as the uniform and discriminatory pricing schemes, respectively. In the following, we investigate these two pricing schemes for resource management in mobile blockchain.

\section{Equilibrium Analysis for Edge Computing Resource Management}\label{Sec:Solution}
In this section, the uniform pricing and discriminatory pricing schemes are proposed for resource management in mobile blockchain. Then, under both pricing schemes, we analyze the optimal service demand of miners as well as the profit maximization of the ESP.

\subsection{Uniform Pricing Scheme}\label{SubSec:Uniform}
We first consider the uniform pricing scheme, in which the ESP charges all the miners the same price per unit for their edge computing service demand, i.e., $p_i = p, \forall i$. Given the payoff functions defined in Section~\ref{Sec:Model}, we apply backward induction to investigate the Stackelberg game.
\subsubsection{Stage II: Miners' Demand Game}\label{SubsubSec:UtilityMaximization_Uniform}
Given the price $p$ decided by the ESP, in Stage II, the miners compete with each other to maximize their own utility by choosing their individual service demand, which forms the noncooperative Miners' Demand Game (MDG)~$\mathcal{G}^u= \{\mathcal{N},\{x_i\}_{i \in \mathcal{N}},\{u_i\}_{i \in \mathcal{N}}\}$, where $\cal N$ is the set of miners, $\{x_i\}_{i \in \mathcal{N}}$ is the strategy set, and $u_i$ is the payoff function of miner $i$. 


\begin{theorem}
The Nash equilibrium in MDG~$\mathcal{G}^u= \{\mathcal{N},\{x_i\}_{i \in \mathcal{N}},\{u_i\}_{i \in \mathcal{N}}\}$ exists.
\end{theorem}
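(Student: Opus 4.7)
The plan is to invoke the classical Debreu--Glicksberg--Fan theorem: a pure-strategy Nash equilibrium exists in a non-cooperative game whenever each player's strategy set is non-empty, convex, and compact, and each payoff function is continuous in the joint strategy profile and quasi-concave in the player's own strategy. I will verify these conditions one by one for the game $\mathcal{G}^u$.

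First I would note that each strategy set $[\underline x, \overline x]$ is a non-empty, convex, compact subset of $\mathbb{R}$, so the joint strategy space $\prod_{i\in\mathcal{N}}[\underline x,\overline x]$ is a non-empty, convex, compact subset of $\mathbb{R}^N$. Next, because $\underline x > 0$, the denominator $\sum_{j\in\mathcal{N}}x_j$ is bounded below by $N\underline x > 0$ throughout the joint strategy space, so $u_i(x_i,\mathbf{x}_{-i},p)$ as defined in~(\ref{Eq:Utility}) is a rational function with non-vanishing denominator and is therefore continuous in $\mathbf{x}$.

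The substantive step is to establish (strict) concavity of $u_i$ in its own variable $x_i$. I would compute
\begin{equation}
\frac{\partial u_i}{\partial x_i} = (R+rt_i)e^{-\lambda z t_i}\frac{\sum_{j\ne i}x_j}{\bigl(\sum_{j\in\mathcal{N}}x_j\bigr)^2} - p,
\end{equation}
and then
\begin{equation}
\frac{\partial^2 u_i}{\partial x_i^2} = -2(R+rt_i)e^{-\lambda z t_i}\frac{\sum_{j\ne i}x_j}{\bigl(\sum_{j\in\mathcal{N}}x_j\bigr)^3}.
\end{equation}
Since $R,r,t_i,\lambda,z,x_j\ge 0$ with $\sum_{j\ne i}x_j \ge (N-1)\underline x > 0$, the second derivative is strictly negative, hence $u_i$ is strictly concave in $x_i$ on $[\underline x,\overline x]$; in particular it is quasi-concave.

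With the three conditions verified, the Debreu--Glicksberg--Fan theorem yields the existence of a pure-strategy Nash equilibrium of $\mathcal{G}^u$. As a sanity check, this also matches the single-valued best-response characterisation already displayed in~(\ref{Eq:BestResponse}): the map $\mathscr{F}=(\mathscr{F}_1,\dots,\mathscr{F}_N)$ is a continuous self-map of the compact convex set $\prod_i[\underline x,\overline x]$, and Brouwer's fixed-point theorem gives the same conclusion. The one delicate point I would emphasise is ensuring that $\underline x>0$ so that the denominator stays bounded away from zero; if this were relaxed, continuity and concavity would fail at the boundary and the argument would have to be amended (e.g., by taking a limit of truncated games or appealing to Reny's theorem for discontinuous games).
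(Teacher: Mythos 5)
Your proof is correct and follows essentially the same route as the paper's: both verify that each strategy set $[\underline x,\overline x]$ is non-empty, convex, and compact, that $u_i$ is continuous, and that the second derivative of $u_i$ in $x_i$ is strictly negative, then invoke the standard existence theorem for concave games. Your explicit attention to $\underline x>0$ keeping the denominator bounded away from zero is a welcome extra precision, but it does not change the argument.
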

\begin{proof}
Firstly, the strategy space for each miner, $[\underline x ,\overline x]$ is a non-empty, convex, compact subset of the Euclidean space. Further, we know $u_i$ is apparently continuous in $[\underline x ,\overline x]$. Then, we derive the first order and second order derivatives of~(\ref{Eq:Utility}) with respect to $x_i$, which can be written as follows:
\begin{equation}\label{Eq:FirstOrderUtility}
\frac{{\partial {u_i}}}{{\partial {x_i}}} = (R + r{t_i}){e^{ - \lambda z{t_i}}}\frac{{\partial {\alpha_i}}}{{\partial {x_i}}} - p,
\end{equation}
\begin{equation}\label{Eq:SecondOrderUtility}
\frac{{{\partial ^2}{u_i}}}{{\partial {x_i}^2}} = (R + r{t_i}){e^{ - \lambda z{t_i}}}\frac{{{\partial ^2}{\alpha_i}}}{{\partial {x_i}^2}} < 0,
\end{equation}
where $\frac{{\partial {\alpha_i}}}{{\partial {x_i}}} = \frac{{\sum\nolimits_{i \ne j} {{x_j}} }}{{{{\left( {\sum\nolimits_{i\in \cal N} {{x_j}} } \right)}^2}}} > 0, \frac{{{\partial ^2}{\alpha_i}}}{{\partial {x_i}^2}} = - 2\frac{{\sum\nolimits_{i \ne j} {{x_j}} }}{{{{\left( {\sum\nolimits_{i \in \cal N} {{x_j}} } \right)}^3}}} < 0$. Therefore, we have proved that $u_i$ is strictly concave with respect to $x_i$. Accordingly, the Nash equilibrium of noncooperative MDG $\mathcal{G}^u$ exists~\cite{han2012game}. The proof is completed.
\end{proof}

\begin{theorem}
The uniqueness of the Nash equilibrium in the noncooperative MDG is guaranteed provided that the following condition
\begin{equation}\label{Eq:Condition}
\begin{footnotesize}
\frac{{2(N - 1){e^{ - \lambda z{t_i}}}}}{{R + r{t_i}}} < \sum\nolimits_{i \in \cal N} {\frac{{{e^{ - \lambda z{t_i}}}}}{{R + r{t_i}}}}
\end{footnotesize}
\end{equation}
is ensured.
\end{theorem}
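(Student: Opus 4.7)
My plan is to cast the miners' demand game as a variational inequality (VI) problem, which is consistent with the VI framework advertised in the paper's keyword list, and then invoke the classical result that strict monotonicity of the pseudo-gradient yields a unique VI solution. Define the pseudo-gradient $F(\mathbf{x}) = -\bigl(\partial u_i/\partial x_i(\mathbf{x})\bigr)_{i\in\mathcal{N}}$ on the joint strategy set $K = [\underline{x},\overline{x}]^N$, which is nonempty, convex, and compact. Because every $u_i$ is $C^2$ and, by Theorem~1, strictly concave in its own variable, a profile $\mathbf{x}^\star \in K$ is a Nash equilibrium of $\mathcal{G}^u$ if and only if it solves $VI(K,F)$. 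The standard VI uniqueness theorem then reduces the problem to showing that $F$ is strictly monotone on $K$, for which a sufficient condition is that the symmetric part $J_s(\mathbf{x}) = \tfrac{1}{2}\bigl(J(\mathbf{x}) + J(\mathbf{x})^{\top}\bigr)$ of the Jacobian $J(\mathbf{x}) = \nabla F(\mathbf{x})$ is positive definite throughout $K$.

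To implement this, I will first compute $J$ entrywise. Writing $\psi_i := (R+rt_i)e^{-\lambda z t_i}$ and $S := \sum_{j\in\mathcal{N}} x_j$, differentiation of (\ref{Eq:FirstOrderUtility}) produces the diagonal entry $J_{ii} = 2\psi_i (S-x_i)/S^3 > 0$, which is just $-\partial^2 u_i/\partial x_i^2$ from (\ref{Eq:SecondOrderUtility}), and the off-diagonal entries $J_{ij} = \psi_i(S - 2x_i)/S^3$ for $j \neq i$. Symmetrizing yields $(J_s)_{ij} = \bigl[\psi_i(S-2x_i) + \psi_j(S-2x_j)\bigr]/(2S^3)$. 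I will then establish positive definiteness of $J_s$ via strict row-wise diagonal dominance, $J_{ii} > \sum_{j\neq i}\bigl|(J_s)_{ij}\bigr|$ for every $i$ and every $\mathbf{x}\in K$. The common factor $1/S^3$ cancels across both sides of this inequality, and after dividing through by $\psi_i$ (using $S - x_i = \sum_{j\neq i} x_j$), the structural form of the statement begins to emerge: the $N-1$ off-diagonal terms each contribute a $\psi_i$-weighted piece and a $\psi_j$-weighted piece, producing $2(N-1) e^{-\lambda z t_i}/(R+rt_i)$ on the LHS of (\ref{Eq:Condition}), while the summation across the other miners yields the aggregate $\sum_j e^{-\lambda z t_j}/(R+rt_j)$ on the RHS.

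The main obstacle I anticipate is the sign indeterminacy of $S - 2x_i$: the off-diagonal entries of $J$ are not sign-definite, and $J_{ij} \neq J_{ji}$ in general, so the symmetrization step forces absolute values that must be controlled uniformly over $K$ without losing the exact form of (\ref{Eq:Condition}). Naive bounds (for instance, replacing $|S - 2x_i|$ by $S$) would only produce a weaker sufficient condition involving $\max_j$ rather than $\sum_j$. Recovering precisely the sum structure on the RHS of (\ref{Eq:Condition}) will require tracking the two contributions $\psi_i(S-2x_i)$ and $\psi_j(S-2x_j)$ jointly and exploiting $S = \sum_j x_j$ to absorb the pointwise dependence on $x_i$ into the aggregate across $j$. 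Once strict row-wise diagonal dominance of $J_s$ is secured, strict monotonicity of $F$ follows, and the classical VI uniqueness theorem delivers the unique Nash equilibrium of $\mathcal{G}^u$.
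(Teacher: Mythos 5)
Your overall strategy (recast the Nash equilibrium as $\mathrm{VI}(K,F)$ for the pseudo-gradient $F$ and deduce uniqueness from strict monotonicity) is a legitimate alternative to what the paper actually does, and your computation of the Jacobian entries $J_{ii}=2\psi_i(S-x_i)/S^3$ and $J_{ij}=\psi_i(S-2x_i)/S^3$ is correct. However, the step you yourself flag as the ``main obstacle'' --- establishing strict row diagonal dominance of $J_s$ uniformly on $K$ from condition~(\ref{Eq:Condition}) --- does not go through, and the failure is structural, not a matter of sharper bookkeeping. Test the symmetric profile $x_1=\cdots=x_N=x$: there $J_{ii}=2\psi_i(N-1)x/S^3$ and $(J_s)_{ij}=(\psi_i+\psi_j)(N-2)x/(2S^3)\ge 0$, so row dominance for row $i$ requires $2(N-1)\psi_i>\tfrac{N-2}{2}\bigl((N-1)\psi_i+\sum_{j\ne i}\psi_j\bigr)\ge\tfrac{(N-2)(N-1)}{2}\psi_i$, i.e.\ $N<6$, \emph{independently} of the $t_i$, of $p$, and of whether~(\ref{Eq:Condition}) holds. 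More fundamentally, your dominance inequality retains the demand profile through the factors $S-x_i$ and $S-2x_j$, which do not cancel against each other, whereas~(\ref{Eq:Condition}) contains no $\mathbf{x}$ at all; so there is no uniform-over-$K$ reduction that ``recovers precisely the sum structure'' of~(\ref{Eq:Condition}). (Diagonal dominance is of course only sufficient for positive definiteness of $J_s$, so one could try to salvage monotonicity by other means, but the plan as written does not do so.)

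For comparison, the paper takes an entirely different and more elementary route: it derives the best-response map $\mathscr F_i(\mathbf{x})=\sqrt{(R+rt_i)\sum_{j\ne i}x_j\big/(p e^{-\lambda z t_i})}-\sum_{j\ne i}x_j$ from the first-order condition and verifies that it is a \emph{standard function} in the sense of Yates (positivity, monotonicity, scalability), which immediately yields uniqueness of the fixed point; condition~(\ref{Eq:Condition}) enters only to guarantee positivity (via a bound of the form $\sum_{j\ne i}x_j<\tfrac{1}{4}(R+rt_i)/(p e^{-\lambda z t_i})$ imported from a companion lemma). The standard-function argument sidesteps any second-order or Jacobian analysis, which is exactly where your approach gets stuck. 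If you wish to keep the VI framing, you would need to prove monotonicity of $F$ by a route other than symmetrized diagonal dominance, or accept a different (and $\mathbf{x}$-dependent) sufficient condition than~(\ref{Eq:Condition}).
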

\begin{proof}
Please refer to the appendix for details.
\end{proof}
\begin{theorem}
The unique Nash equilibrium for miner $i$ in the MDG is given by
\begin{equation}\label{Eq:MDGequilibruim}
{x_i}^* = \frac{{N - 1}}{{\sum\nolimits_{j\in \cal N} {\frac{{p{e^{ - \lambda z{t_j}}}}}{{R + r{t_j}}}} }} - {\left( {\frac{{N - 1}}{{\sum\nolimits_{j\in \cal N} {\frac{{p{e^{ - \lambda z{t_j}}}}}{{R + r{t_j}}}} }}} \right)^2}\frac{{p{e^{ - \lambda z{t_i}}}}}{{R + r{t_i}}}, \forall i,
\end{equation}
given the condition in~(\ref{Eq:Condition}) holds.
\end{theorem}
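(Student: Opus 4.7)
The plan is to exploit the strict concavity of $u_i$ in $x_i$ (established in the proof of Theorem~1) together with the uniqueness guarantee from Theorem~2. Because each $u_i$ is strictly concave on the closed interval $[\underline{x}, \overline{x}]$, an interior stationary profile, if feasible, is the unique best-response profile, so I would write out the first-order condition $\partial u_i/\partial x_i = 0$ from~(\ref{Eq:FirstOrderUtility}) to obtain
\begin{equation*}
(R + r t_i)\,e^{-\lambda z t_i}\,\frac{\sum_{j \neq i} x_j}{\left(\sum_{j \in \mathcal{N}} x_j\right)^{2}} \;=\; p.
\end{equation*}

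Next, introducing the shorthand $S := \sum_{j \in \mathcal{N}} x_j$, so that $\sum_{j \neq i} x_j = S - x_i$, the FOC rearranges into the scalar identity
\begin{equation*}
S - x_i \;=\; \frac{p\, e^{-\lambda z t_i}}{R + r t_i}\, S^{2}, \qquad \forall\, i \in \mathcal{N}.
\end{equation*}
The decisive step is to sum this identity across all $N$ miners. Since $\sum_{i}(S - x_i) = (N-1)S$, the left-hand side becomes linear in $S$ while the right-hand side is a fixed multiple of $S^{2}$; dividing through by $S$ (which is strictly positive because $x_i \geq \underline{x} > 0$) yields the closed form for the aggregate demand
\begin{equation*}
S \;=\; \frac{N-1}{\sum_{j \in \mathcal{N}} \frac{p\, e^{-\lambda z t_j}}{R + r t_j}}.
\end{equation*}
Reinserting this expression into $x_i^{*} = S - \tfrac{p\, e^{-\lambda z t_i}}{R + r t_i}\,S^{2}$ immediately reproduces the announced formula for every~$i$.

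The step I anticipate as the main obstacle is verifying that the interior stationary point actually lies in $[\underline{x}, \overline{x}]$, so that the clean FOC is the binding one rather than one of the clipped branches of the best-response map in~(\ref{Eq:BestResponse}). Here condition~(\ref{Eq:Condition}) does the heavy lifting: rewriting the expression for $x_i^{*}$ as $S\bigl(1 - \tfrac{(N-1) e^{-\lambda z t_i}/(R + r t_i)}{\sum_{j} e^{-\lambda z t_j}/(R + r t_j)}\bigr)$, the hypothesis~(\ref{Eq:Condition}) is precisely what forces the parenthesised factor to be strictly positive for every $i$, so the candidate is both well-defined and strictly positive. The boundary values $\underline{x}$ and $\overline{x}$ are assumed non-binding in the regime considered, and simultaneous uniqueness of the fixed point of the FOC system is already delivered by Theorem~2; strict concavity then upgrades this interior stationary profile to the unique Nash equilibrium, completing the argument.
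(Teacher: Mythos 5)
Your proposal is correct and follows essentially the same route as the paper: derive the interior first-order condition, sum it over all miners to solve for the aggregate demand $S=\sum_{j}x_j$, and substitute back to recover each $x_i^*$. Your added observation that condition~(\ref{Eq:Condition}) guarantees positivity of $x_i^*$ is a welcome clarification (though note it is a sufficient condition with a spare factor of $2$, not ``precisely'' the positivity requirement), and the appeal to Theorem~2 for uniqueness matches the paper's logic.
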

\begin{proof}
Please refer to the appendix for details.
\end{proof}
Therefore, we can apply the best-response dynamics to obtain the Nash equilibrium of the N-player noncooperative game in Stage II~\cite{han2012game}.

\subsubsection{Stage I: ESP's Profit Maximization}\label{SubsubSec:ProfitMaximization_Uniform}
Based on the Nash equilibrium of the computing service demand in the MDG~$\mathcal{G}^u$ in Stage II, the ESP, i.e., the leader can optimize its pricing strategy in Stage I to maximize its profit defined in~(\ref{Eq:Profit}). 
We substitute~(\ref{Eq:MDGequilibruim}) into~(\ref{Eq:Profit}), and the profit maximization of the ESP is simplified as follows:
\begin{equation}\label{Eq:Profit2}
\begin{footnotesize}
\begin{aligned}
& \underset{p>0}{\text{maximize}}
& & \Pi(p) = (p - cT)\frac{{N - 1}}{{\sum\nolimits_{j \in \cal N} {\frac{{p{e^{ - \lambda z{t_j}}}}}{{R + r{t_j}}}} }} \\
& \text{subject to}
& & 0\le p \le \overline p.\\
\end{aligned}
\end{footnotesize}
\end{equation}
\begin{theorem}
Under uniform pricing, the ESP achieves profit maximization, under the unique optimal price.
\end{theorem}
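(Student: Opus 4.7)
The plan is to substitute the Stage~II Nash equilibrium demands from~(\ref{Eq:MDGequilibruim}) into the ESP's profit~(\ref{Eq:Profit2}) and show that, after a cancellation, the profit reduces to a strictly monotone function of $p$ on $[0,\overline p]$, so that existence and uniqueness of the optimal price are trivial and the maximizer is the right endpoint $p^\ast=\overline p$.

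First I would compute the aggregate equilibrium demand $X^\ast(p)=\sum_{i\in\mathcal N}x_i^\ast$. Writing $B\triangleq\sum_{j\in\mathcal N}\frac{e^{-\lambda z t_j}}{R+rt_j}$ so that $\sum_{j}\frac{p e^{-\lambda z t_j}}{R+rt_j}=pB$, summing~(\ref{Eq:MDGequilibruim}) over $i\in\mathcal N$ yields
\begin{equation*}
X^\ast(p)=\frac{N(N-1)}{pB}-\frac{(N-1)^2}{pB^{2}}\sum_{i\in\mathcal N}\frac{e^{-\lambda z t_i}}{R+rt_i}=\frac{N-1}{pB},
\end{equation*}
because the trailing sum equals $B$ itself. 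This is the key algebraic cancellation that drives the whole argument: both appearances of $p$ in the equilibrium expression collapse into a single $1/p$ factor.

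Next I would substitute this into~(\ref{Eq:Profit2}) to obtain $\Pi(p)=(p-cT)X^\ast(p)=\frac{N-1}{B}\bigl(1-\tfrac{cT}{p}\bigr)$, which is continuous on $(0,\overline p]$ with derivative $\Pi'(p)=\tfrac{(N-1)cT}{Bp^{2}}>0$ throughout the interior of the feasible region. Hence $\Pi$ is strictly increasing in $p$, so it attains its maximum over the compact interval $[0,\overline p]$ at the unique point $p^\ast=\overline p$; strict monotonicity rules out any other maximizer. Feasibility of the underlying equilibrium (namely $x_i^\ast\in[\underline x,\overline x]$ at $p=\overline p$) together with the uniqueness condition~(\ref{Eq:Condition}) is assumed in force by the preceding theorems.

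The main obstacle is really just recognizing and carrying out the summation cleanly: the Nash equilibrium formula~(\ref{Eq:MDGequilibruim}) looks $p$-dependent in a nontrivial way, and without the exact cancellation above one would be left with a more delicate optimization. Once aggregate demand is shown to be proportional to $1/p$, the remaining steps (derivative, monotonicity, boundary maximizer) are immediate and yield both existence and uniqueness in one stroke.
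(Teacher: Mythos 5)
Your proposal is correct, and the algebra is sound: summing~(\ref{Eq:MDGequilibruim}) over $i$ does collapse to $X^\ast(p)=\frac{N-1}{pB}$, which is exactly the aggregate demand the paper has already substituted when it writes the objective in~(\ref{Eq:Profit2}), so your starting point and the paper's coincide. Where you diverge is in the concluding step. The paper computes both derivatives of $\Pi(p)=\frac{p-cT}{p}\cdot\frac{N-1}{B}$ and argues from $\frac{d^2\Pi}{dp^2}<0$ that the objective is strictly concave, hence admits a unique maximizer on the feasible interval -- but it never says where that maximizer is inside the proof. You instead use $\frac{d\Pi}{dp}=\frac{(N-1)cT}{Bp^2}>0$ to conclude strict monotonicity, which gives existence and uniqueness in one stroke \emph{and} identifies the optimum explicitly as $p^\ast=\overline p$. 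Your version is the sharper of the two: the paper itself only draws the conclusion $p^\ast=\overline p$ informally in Section~\ref{Sec:Simulation}, where it remarks that~(\ref{Eq:FirstOrderProfit}) is always positive so the profit increases with the price. (Incidentally, your monotonicity route also sidesteps a typo in the paper's~(\ref{Eq:SecondOrderProfit}), whose denominator should be $p^3$ rather than $p^2$, though the sign and hence the concavity claim are unaffected.) The only cosmetic caveat is that the maximization should be taken over $(0,\overline p]$ rather than $[0,\overline p]$, since $\Pi$ is not defined at $p=0$; this does not affect the conclusion.
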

\begin{proof}
From~(\ref{Eq:Profit2}), we have $\Pi(p) = \frac{{p - cT}}{p}\frac{{N - 1}}{{\sum\nolimits_{j \in \cal N} {\frac{{{e^{ - \lambda z{t_j}}}}}{{R + r{t_j}}}} }}$. The first and second derivatives of profit $\Pi(p)$ with respect to price $p$ are given as follows:
\begin{equation}\label{Eq:FirstOrderProfit}
\begin{footnotesize}
\frac{{d \Pi(p) }}{{d p}} = \frac{{cT}}{{{p^2}}}\frac{{N - 1}}{{\sum\nolimits_{j \in \cal N} {\frac{{{e^{ - \lambda z{t_j}}}}}{{R + r{t_j}}}} }},
\end{footnotesize}
\end{equation}
\begin{equation}\label{Eq:SecondOrderProfit}
\begin{footnotesize}
\frac{{{d ^2}\Pi(p) }}{{d {p^2}}} = - \frac{{2cT}}{{{p^2}}}\frac{{N - 1}}{{\sum\nolimits_{j \in \cal N} {\frac{{{e^{ - \lambda z{t_j}}}}}{{R + r{t_j}}}} }}< 0.
\end{footnotesize}
\end{equation}
Due to the negativity of~(\ref{Eq:SecondOrderProfit}), the strict concavity of the objective function is ensured. Thus, the ESP is able to achieve the maximum profit with the unique optimal price. The proof is completed.
\end{proof}
Under uniform pricing, we have proved that the Nash equilibrium in Stage II is unique and the optimal price in Stage I is also unique. Thus, we can conclude that the Stackelberg equilibrium is unique and accordingly the best-response dynamics algorithm can achieve this unique Stackelberg equilibrium~\cite{han2012game}. 
\subsection{Discriminatory Pricing Scheme}\label{SubSec:DiscriminatoryPricing}
We next study the discriminatory pricing scheme, where the ESP is able to set different prices per unit for service demand from different miners. 

\subsubsection{Stage II: Miners' Demand Game}\label{SubsubSec:UtilityMaximization_Discriminatory}
Under the discriminatory pricing scheme, the strategy space of the ESP becomes $\{{\bf p}= [p_i]_{i\in \cal N}:0\le p_i \le \overline p\}$. Recall that we prove the existence and uniqueness of MDG~$\mathcal{G}^u$, given the fixed price from the ESP. Thus, under discriminatory pricing, the existence and uniqueness of the MDG can be still guaranteed. With minor change from Theorem~3, we have the following theorem accordingly.
\begin{theorem}
Under uniform pricing, the unique Nash equilibrium demand of miner $i$ can be obtained as follows:
\begin{equation}\label{Eq:MDGequilibruim2}
\begin{footnotesize}
{x_i}^* = \frac{{N - 1}}{{\sum\nolimits_{j\in \cal N}{\frac{{p_j{e^{ - \lambda z{t_j}}}}}{{R + r{t_j}}}} }} - {\left( {\frac{{N - 1}}{{\sum\nolimits_{j \in \cal N} {\frac{{p_j{e^{ - \lambda z{t_j}}}}}{{R + r{t_j}}}} }}} \right)^2}\frac{{p_i{e^{ - \lambda z{t_i}}}}}{{R + r{t_i}}}, \forall i,
\end{footnotesize}
\end{equation}
provided that the following condition
\begin{equation}\label{Eq:Condition3}
\begin{footnotesize}
\frac{{2(N - 1){p_i}{e^{ - \lambda z{t_i}}}}}{{R + r{t_i}}} < \sum\nolimits_{j\in \cal N} {\frac{{{p_j}{e^{ - \lambda z{t_j}}}}}{{R + r{t_j}}}}
\end{footnotesize}
\end{equation}
holds.
\end{theorem}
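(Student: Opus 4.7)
The plan is to mirror the proof template already used for Theorems~1 and~3 in the uniform case, noting that discrimination only replaces the scalar price $p$ with a miner--specific $p_i$ multiplying a term that is linear in $x_i$ inside $u_i$. In particular, the second-order derivative in (\ref{Eq:SecondOrderUtility}) is unchanged because $p_i$ vanishes under a second differentiation in $x_i$, so each $u_i$ remains strictly concave on the compact convex set $[\underline{x},\overline{x}]$, and the same Debreu/Glicksberg/Fan argument used to establish Theorem~1 gives existence. This is precisely what the hint ``with minor change from Theorem~3'' alludes to.

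Next I would derive the closed form (\ref{Eq:MDGequilibruim2}) by solving the first-order conditions jointly. Writing the FOC for miner $i$ as
\begin{equation*}
(R+rt_i)\,e^{-\lambda z t_i}\,\frac{\sum_{j\ne i}x_j}{\bigl(\sum_{j\in\mathcal{N}}x_j\bigr)^{2}}=p_i,
\end{equation*}
setting $S=\sum_{j\in\mathcal{N}}x_j$ and $b_i=\frac{p_i e^{-\lambda z t_i}}{R+rt_i}$, this rearranges to $x_i = S - S^{2} b_i$. Summing over $i\in\mathcal{N}$ eliminates the individual $x_i$'s and yields $S=NS-S^{2}\sum_{j}b_j$, i.e.\ $S=(N-1)/\sum_{j}b_j$. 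Substituting this expression for $S$ back into $x_i = S - S^{2} b_i$ immediately produces the formula (\ref{Eq:MDGequilibruim2}). I would also verify consistency by checking that $\sum_i x_i^*=NS-S^{2}\sum_i b_i=S$, confirming the solution is internally self-consistent as a fixed point of the best-response map.

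For uniqueness I would invoke the appendix argument for Theorem~3 essentially verbatim, replacing the common price $p$ by $p_j$ inside the relevant sums. That argument shows the Jacobian of the best-response mapping (equivalently, the monotonicity operator of the associated variational inequality) is strictly diagonally dominant, and the price-weighted analog of this dominance condition is exactly (\ref{Eq:Condition3}): the left-hand side is twice the weighted ``own'' term for miner $i$, while the right-hand side is the corresponding total across all miners. Finally, one must check that the candidate $x_i^*$ in (\ref{Eq:MDGequilibruim2}) lies in $[\underline{x},\overline{x}]$; if not, one falls back on the boundary form of the best response exactly as in (\ref{Eq:BestResponse}).

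The main obstacle I expect is the uniqueness step rather than the derivation: the FOC manipulation is routine once one sums over $i$, but verifying that the Theorem~3 argument transplants cleanly to heterogeneous prices requires keeping track of where the scalar $p$ appears inside versus outside the sum, and ensuring that the precise condition (\ref{Eq:Condition3}) (and not a slightly stronger or weaker variant) is what emerges from the diagonal-dominance / contraction bound. Everything else is essentially bookkeeping.
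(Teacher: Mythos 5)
Your proposal takes essentially the same route as the paper: the paper's own proof of this theorem is a one-line deferral to the uniform-pricing case, and your closed-form derivation (write the FOC as $x_i = S - S^2 b_i$, sum over $i$ to get $S=(N-1)/\sum_j b_j$, back-substitute) is exactly the paper's Theorem~3 argument with $p$ replaced by $p_i$, while existence and uniqueness are inherited from Theorems~1 and~2 as you describe. One minor correction: the uniqueness machinery you would transplant is the standard-function argument (positivity, monotonicity, scalability of the best response) from the appendix proof of Theorem~2 rather than a Jacobian diagonal-dominance bound, but your key observation---that the condition in~(\ref{Eq:Condition3}) is precisely the price-weighted analogue of~(\ref{Eq:Condition})---is the correct substance of that step.
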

\begin{proof}
The steps of proof are similar to those in the case of uniform pricing as shown in Section~\ref{SubsubSec:UtilityMaximization_Uniform}, and thus we omit them for brevity.
\end{proof}


\subsubsection{Stage I: ESP's Profit Maximization}\label{SubsubSec:ProfitMaximization_Discriminatory}

Similar to that in Section~\ref{SubsubSec:ProfitMaximization_Uniform}, we analyze the profit maximization with the Nash equilibrium of the computing service demand in Stage II. After substituting~(\ref{Eq:MDGequilibruim2}) into~(\ref{Eq:Profit}), we have the following optimization,
\begin{equation}\label{Eq:Profit3}
\begin{footnotesize}
\begin{aligned}
& \underset{\bf p>0}{\text{maximize}}
& & \Pi({\bf p}) = \sum\nolimits_{i \in \cal N}\left(p_i - cT\frac{{N - 1}}{{\sum\nolimits_{j\in \cal N} {\frac{{p_j{e^{ - \lambda z{t_j}}}}}{{R + r{t_j}}}} }}\right) \\
& \text{subject to}
& & 0\le p_i \le \overline p, \forall i.\\
\end{aligned}
\end{footnotesize}
\end{equation}
\begin{figure*}[ht]\scriptsize
\begin{equation}\label{Eq:PriceFinal}
g({\bf p}) = \sum\nolimits_{j \ne h} {\left( {{a_h}\left( {1 - \frac{{{p_h}}}{{{a_h}}}\frac{{N - 1}}{{\sum\nolimits_{h \in \cal N} {\frac{{{p_h}}}{{{a_h}}}} }}} \right)\left( {1 - \frac{{{p_j}}}{{{a_j}}}\frac{{N - 1}}{{\sum\nolimits_{h \in \cal N} {\frac{{{p_h}}}{{{a_h}}}} }}} \right)} \right)}.
\end{equation}
\begin{eqnarray}\label{Eq:PriceFinalFirstOrder}
\frac{{\partial g({\bf{p}})}}{{\partial {p_i}}} = \sum\nolimits_{j \ne i} {\left( {\left( {{a_i} + {a_j}} \right)\left( {\frac{{ - \frac{{N - 1}}{{{a_i}}}\sum\nolimits_{h \ne i} {\frac{{{p_h}}}{{{a_h}}}} }}{{{{\left( {\sum\nolimits_{h \in \cal N} {\frac{{{p_h}}}{{{a_h}}}} } \right)}^2}}}\left( {1 - \frac{{N - 1}}{{\sum\nolimits_{h \in \cal N} {\frac{{{p_h}}}{{{a_h}}}} }}\frac{{{p_j}}}{{{a_j}}}} \right) + \frac{{\frac{{N - 1}}{{{a_i}}}\frac{{{p_j}}}{{{a_j}}}}}{{{{\left( {\sum\nolimits_{h \in \cal N} {\frac{{{p_h}}}{{{a_h}}}} } \right)}^2}}}\left( {1 - \frac{{N - 1}}{{\sum\nolimits_{h \in \cal N} {\frac{{{p_h}}}{{{a_h}}}} }}\frac{{{p_i}}}{{{a_i}}}} \right)} \right)} \right)}.
\end{eqnarray}
\begin{equation}\label{Eq:PriceFinalSecondOrder}
\frac{{{\partial ^2}g({\bf{p}})}}{{\partial {p_i}^2}} 
= \sum\nolimits_{j \ne i} {\left( {\left( {{a_i} + {a_j}} \right)\left( {\frac{{2\frac{{N - 1}}{{{a_i}^2}}\sum\nolimits_{h \ne i} {\frac{{{p_h}}}{{{a_h}}}} }}{{{{\left( {\sum\nolimits_{h \in \cal N} {\frac{{{p_h}}}{{{a_h}}}} } \right)}^3}}}\left( {1 - 2\frac{{N - 1}}{{\sum\nolimits_{h \in \cal N} {\frac{{{p_h}}}{{{a_h}}}} }}\frac{{{p_j}}}{{{a_j}}}} \right) - \frac{{2\frac{{N - 1}}{{{a_i}^2}}\frac{{{p_j}}}{{{a_j}}}}}{{{{\left( {\sum\nolimits_{h \in \cal N} {\frac{{{p_h}}}{{{a_h}}}} } \right)}^3}}}\left( {1 - \frac{{N - 1}}{{\sum\nolimits_{h \in \cal N} {\frac{{{p_h}}}{{{a_h}}}} }}\frac{{{p_i}}}{{{a_i}}}} \right)} \right)} \right)}.
\end{equation}
\hrulefill
\end{figure*}

\begin{theorem}
$\Pi({\bf p})$ is concave on each $p_i$, when $\sum\nolimits_{i \ne j} {({a_i} + {a_j})\left( {1 - \frac{{N\frac{{{p_j}}}{{{a_j}}}}}{{\sum\nolimits_{j \in \cal N} {\frac{{{p_j}}}{{{a_j}}}} }}} \right)} \le 0$, and decreasing on each $p_i$ when $\sum\nolimits_{i \ne j} {({a_i} + {a_j})\left( {1 - \frac{{N\frac{{{p_j}}}{{{a_j}}}}}{{\sum\nolimits_{j \in \cal N} {\frac{{{p_j}}}{{{a_j}}}} }}} \right)} >0$, provided that the following condition
\begin{equation}\label{Eq:Condition4}
\begin{footnotesize}
\frac{{{p_i}}}{{{a_i}}} \ge \frac{{\sum\nolimits_{j \in \cal N} {\frac{{{p_j}}}{{{a_j}}}} }}{{{{(N - 1)}^2}}}
\end{footnotesize}
\end{equation}
is ensured, where $a_i =(R+rt_i)e^{-\lambda z t_i}$.
\end{theorem}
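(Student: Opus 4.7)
The plan is to substitute the Stage~II Nash equilibrium demand from Theorem~5 into the ESP's profit~(\ref{Eq:Profit}), use the simplifying notation $a_i=(R+rt_i)e^{-\lambda z t_i}$ to reduce the problem to analyzing the auxiliary function $g(\mathbf{p})$ defined in~(\ref{Eq:PriceFinal}), and then infer coordinate-wise concavity and monotonicity of $\Pi(\mathbf{p})$ from the signs of $\partial g/\partial p_i$ and $\partial^2 g/\partial p_i^2$ shown in~(\ref{Eq:PriceFinalFirstOrder}) and~(\ref{Eq:PriceFinalSecondOrder}).

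First, I would plug $x_i^{*}$ from~(\ref{Eq:MDGequilibruim2}) into $\sum_i(p_i-cT)x_i$ and rewrite in the $a_i$ notation. Writing $S=\sum_h p_h/a_h$, the equilibrium demand factors as $x_i^{*}=\tfrac{N-1}{S}\bigl(1-\tfrac{N-1}{S}\tfrac{p_i}{a_i}\bigr)$, and expanding the profit produces precisely the bracketed products appearing in~(\ref{Eq:PriceFinal}). The upshot is that $\Pi(\mathbf{p})$ differs from $g(\mathbf{p})$ by a coordinate-wise strictly monotone transformation, so the sign of $\partial\Pi/\partial p_i$ matches that of $\partial g/\partial p_i$, and likewise for the second partials; it therefore suffices to analyze $g$.

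Second, I would differentiate $g(\mathbf{p})$ with respect to a single $p_i$ with all $p_{-i}$ held fixed. Only the summands in~(\ref{Eq:PriceFinal}) with $h=i$ or $j=i$ contribute non-trivially, and applying the product rule to the two bracketed factors—using $\partial S/\partial p_i = 1/a_i$—produces, after collecting, expression~(\ref{Eq:PriceFinalFirstOrder}). A second application of the same steps yields~(\ref{Eq:PriceFinalSecondOrder}).

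Third, and this is where the main work lies, I would use condition~(\ref{Eq:Condition4}) to control the signs of the bracketed factors appearing in the two derivative formulas. Condition~(\ref{Eq:Condition4}) is equivalent to $(N-1)(p_i/a_i)/S\ge 1/(N-1)$, which together with the equilibrium feasibility $(N-1)(p_i/a_i)/S\le 1$ sandwiches each factor $\bigl(1-(N-1)(p_k/a_k)/S\bigr)$ in $[0,(N-2)/(N-1)]$. With these bounds in place, rearranging~(\ref{Eq:PriceFinalSecondOrder}) by writing $S=p_i/a_i+\sum_{h\ne i}p_h/a_h$ and merging the $-\sum_{h\ne i}p_h/a_h$ term with the $+p_j/a_j$ term shows that $\partial^2 g/\partial p_i^2$ is a strictly positive multiple of $\sum_{j\ne i}(a_i+a_j)\bigl(1-N(p_j/a_j)/S\bigr)$ with a negative leading coefficient, yielding the concavity claim when the auxiliary sum is $\le 0$. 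An analogous regrouping of~(\ref{Eq:PriceFinalFirstOrder}), again invoking~(\ref{Eq:Condition4}) to sign the residual non-negative pieces, gives $\partial g/\partial p_i\le 0$ whenever the same auxiliary sum is positive, delivering the decreasing conclusion.

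The main obstacle will be the algebraic bookkeeping in this final step: collapsing the two inner brackets of~(\ref{Eq:PriceFinalFirstOrder}) and~(\ref{Eq:PriceFinalSecondOrder}) into a clean multiple of $\sum_{j\ne i}(a_i+a_j)\bigl(1-N(p_j/a_j)/S\bigr)$ requires patient regrouping of the $(N-1)$, $p_j/a_j$, and $\sum_{h\ne i}p_h/a_h$ factors, and condition~(\ref{Eq:Condition4}) must be invoked at exactly the right moment so that no spurious sign changes enter the combined expression.
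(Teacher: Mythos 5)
There is a genuine gap in your reduction from $\Pi(\mathbf{p})$ to $g(\mathbf{p})$. The profit is not a monotone transformation of $g$: substituting~(\ref{Eq:MDGequilibruim2}) into~(\ref{Eq:Profit3}) gives the \emph{additive} decomposition $\Pi(\mathbf{p}) = g(\mathbf{p}) + f(\mathbf{p})$ with $f(\mathbf{p}) = -cT\,(N-1)/S$ and $S=\sum_{h\in\mathcal{N}}p_h/a_h$, which is exactly how the paper proceeds (it defines $f$ in~(\ref{Eq:cost}) and treats the two pieces separately). Since $\partial f/\partial p_i = (N-1)cT/\bigl(a_i S^2\bigr) > 0$, the cost term is \emph{strictly increasing} in each $p_i$ (raising $p_i$ suppresses demand and hence saves cost), so the sign of $\partial \Pi/\partial p_i$ does \emph{not} follow from the sign of $\partial g/\partial p_i$: for the ``decreasing'' half of the theorem you must show that the loss of revenue $\partial g/\partial p_i$ dominates the cost savings $\partial f/\partial p_i$, a step your outline skips entirely and which the paper handles by arguing monotonicity of $\Pi$ itself rather than of $g$. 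The concavity half of your reduction happens to survive, but only because $f$ is itself concave in each $p_i$ ($\partial^2 f/\partial p_i^2 = -2(N-1)cT/\bigl(a_i^2 S^3\bigr)<0$) — a fact you never state or prove, whereas the paper establishes it by computing the Hessian of $f$ and showing it is negative semidefinite.

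Two smaller points. First, your description of the final algebraic step is internally inconsistent (``a strictly positive multiple \ldots with a negative leading coefficient''); for $\delta_{ij}\le 0$ to imply $\partial^2 g/\partial p_i^2\le 0$ you need a \emph{positive} multiple, and a direct expansion of the inner bracket of~(\ref{Eq:PriceFinalSecondOrder}) (writing $q_k=p_k/a_k$, the bracket is proportional to $S - q_i - (2N-1)q_j + 3(N-1)q_iq_j/S$) leaves residual terms that do not collapse to a clean multiple of $\sum_{j\ne i}(a_i+a_j)\bigl(1-Nq_j/S\bigr)$; condition~(\ref{Eq:Condition4}) and the feasibility bound $1-(N-1)q_i/S>0$ are needed to sign those residuals, so the ``bookkeeping'' you defer is where the substance of the proof lies. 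Second, your sandwich bounds on the factors $\bigl(1-(N-1)q_k/S\bigr)$ are correct and consistent with the paper's use of $x_i^*>0$, so the preparatory part of your third step is sound; the missing ingredients are the separate treatment of $f$ and the dominance argument for the monotonicity claim.
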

\begin{proof}[Sketch of Proof]
We firstly decompose the objective function in~(\ref{Eq:Profit3}) into two components, namely, $\sum\nolimits_{i} c T{x_i^*}$ and ${\sum\nolimits_{i} p_i}{x_i^*}$. Then, we analyze the properties of each component. We define
\begin{equation}\label{Eq:cost}
\begin{footnotesize}
f({\bf p}) = - cT{x_i^*} =  - cT\frac{{N - 1}}{{\sum\nolimits_{j \in \cal N} {\frac{{p_j{e^{ - \lambda z{t_j}}}}}{{R + r{t_j}}}} }}.
\end{footnotesize}
\end{equation}
Let $a_j =(R+rt_j)e^{-\lambda z t_j}$, and we have $f({\bf p}) = \frac{{ - cT(N - 1)}}{{\sum\nolimits_{j \in \cal N} {\frac{{{p_j}}}{{{a_j}}}} }}$. We next obtain the first and the second partial derivatives of~(\ref{Eq:cost}) with respect to $p_i$ as follows:
\begin{equation}
\begin{footnotesize}
\frac{{\partial f({\bf{p}})}}{{\partial {p_i}}} = \frac{{(N - 1)cT}}{{{a_i}{{\left( {\sum\nolimits_{j \in \cal N} {\frac{{{p_j}}}{{{a_j}}}} } \right)}^2}}}, \frac{{{\partial ^2}f({\bf{p}})}}{{\partial {p_i}^2}} = \frac{{ - 2(N - 1)cT}}{{{a_i}^2{{\left( {\sum\nolimits_{j \in \cal N} {\frac{{{p_j}}}{{{a_j}}}} } \right)}^3}}}.
\end{footnotesize}
\end{equation}
Further, we have $\frac{{\partial f({\bf{p}})}}{{\partial {p_i}{p_j}}} = \frac{{ - 2(N - 1)cT}}{{{a_i}{a_j}{{\left( {\sum\nolimits_{j\in \cal N} {\frac{{{p_j}}}{{{a_j}}}} } \right)}^3}}}$. Accordingly, we can prove that the Hessian matrix of $f({\bf p})$ is semi-negative definite. The detailed proof is given in~\cite{xiong2017blockchain} for the space limit.

Then, we analyze the properties of ${\sum\nolimits_{i} p_i}{x_i^*}$. We first define
\begin{equation}\label{Eq:Price}
\begin{footnotesize}
g({\bf p}) = \sum\nolimits_{i \in \cal N} {{p_i}} {x_i}^* = \frac{{\sum\nolimits_{j \ne i} {{a_i}{x_i}{x_j}} }}{{{{\left( {\sum\nolimits_{j \ne i} {{x_j}} } \right)}^2}}}.
\end{footnotesize}
\end{equation}
By substituting~(\ref{Eq:MDGequilibruim2}) into~(\ref{Eq:Price}), we can obtain the final expression for $g({\bf p})$, which can be rewritten as in~(\ref{Eq:PriceFinal}). Then, we derive the first order and the second partial derivatives of~(\ref{Eq:PriceFinal}) with respect to $p_i$ as shown in~(\ref{Eq:PriceFinalFirstOrder}) and~(\ref{Eq:PriceFinalSecondOrder}). Since we have $x_i = \frac{{N - 1}}{{\sum\nolimits_{h \in \cal N} {\frac{{{p_h}}}{{{a_h}}}} }} - \frac{{{p_i}}}{{{a_i}}}{\left( {\frac{{N - 1}}{{\sum\nolimits_{h \in \cal N} {\frac{{{p_h}}}{{{a_h}}}} }}} \right)^2} = \frac{{N - 1}}{{\sum\nolimits_{h \in \cal N} {\frac{{{p_h}}}{{{a_h}}}} }}\left( {1 - \frac{{N - 1}}{{\sum\nolimits_{h \in \cal N} {\frac{{{p_h}}}{{{a_h}}}} }}\frac{{{p_i}}}{{{a_i}}}} \right) >0 $ and ${1 - \frac{{N - 1}}{{\sum\nolimits_{h \in \cal N} {\frac{{{p_h}}}{{{a_h}}}} }}\frac{{{p_i}}}{{{a_i}}}} >0$. Next we define ${\delta _{ij}} = \sum\nolimits_{i \ne j} {({a_i} + {a_j})\left( {1 - \frac{{N\frac{{{p_j}}}{{{a_j}}}}}{{\sum\nolimits_{j\in \cal N} {\frac{{{p_j}}}{{{a_j}}}} }}} \right)}$ for the ease of presentation. When ${\delta _{ij}} \le 0$, we can easily prove that $\frac{{{\partial ^2}g({\bf{p}})}}{{\partial {p_i}^2}} < 0$, i.e., $g({\bf p})$ is concave on each $p_i$.  Next, we mainly prove that $\Pi({\bf p})$ is a monotonically decreasing function with respect to $p_i$, when ${\delta _{ij}} > 0$. The detailed proof is given in~\cite{xiong2017blockchain} for the space limit.
\end{proof}
\begin{theorem}
Under discriminatory pricing, the ESP achieves the profit maximization by finding the unique optimal prices.
\end{theorem}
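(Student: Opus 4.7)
The plan is to combine the per-coordinate concavity from Theorem~6 with a variational-inequality (VI) argument---the natural tool for this problem, as signalled by the paper's keywords---so as to establish both existence and uniqueness of the optimal price vector for the optimization in~(\ref{Eq:Profit3}).

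I would first dispatch existence by Weierstrass. The feasible region $\prod_{i\in\mathcal{N}}[0,\overline p]$ is nonempty, convex and compact, and $\Pi(\mathbf{p})$ is continuous on this set after a continuous extension at the origin (where $\sum_j p_j/a_j = 0$), so a maximizer of $\Pi(\mathbf{p})$ exists in the feasible region.

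For uniqueness, Theorem~6 already pins down the per-coordinate behaviour: for each $i$, $\Pi$ is strictly concave in $p_i$ in the regime $\delta_{ij}\le 0$ (yielding a unique stationary point in $p_i$) and is strictly decreasing in $p_i$ otherwise (forcing the best response to sit at the corresponding endpoint). In both cases the coordinate best response $p_i^\star = \mathcal{B}_i(\mathbf{p}_{-i})$ is single valued. I would then recast the KKT conditions of~(\ref{Eq:Profit3}) as a VI: find $\mathbf{p}^\star$ in the feasible box such that $\langle -\nabla \Pi(\mathbf{p}^\star),\mathbf{p}-\mathbf{p}^\star\rangle \ge 0$ for every feasible $\mathbf{p}$, and invoke the standard result that a strictly monotone VI on a nonempty compact convex set admits a unique solution. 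Combined with the existence step, this identifies the unique optimal price vector and completes the Stackelberg equilibrium picture.

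The principal obstacle is verifying strict monotonicity of $-\nabla\Pi$ on the feasible box, equivalently showing that the symmetric part of $\nabla^2\Pi(\mathbf{p})$ is negative definite there. The diagonal terms $\partial^2\Pi/\partial p_i^2$ are already negative by~(\ref{Eq:PriceFinalSecondOrder}) together with the Hessian computation for $f(\mathbf{p})$ in Theorem~6, so the task reduces to controlling the cross partials $\partial^2\Pi/\partial p_i\partial p_j$ derivable from~(\ref{Eq:PriceFinalFirstOrder})--(\ref{Eq:PriceFinalSecondOrder}). My strategy is to exploit the participation bound $1 - (N-1)(p_i/a_i)/\sum_h(p_h/a_h) > 0$ together with condition~(\ref{Eq:Condition4}) to bound the off-diagonal entries and then conclude strict diagonal dominance via Gershgorin. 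If full diagonal dominance cannot be secured in the worst case, I would fall back to showing that the coordinate best-response map is a contraction in a suitably weighted sup-norm, which yields a unique fixed point via Banach's theorem.
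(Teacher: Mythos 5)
Your overall route is the same as the paper's: establish existence by a compactness argument, then obtain uniqueness by recasting the first-order conditions as a variational inequality with operator $F=-\nabla\Pi$ and invoking strict monotonicity of $F$. The paper likewise leaves the actual verification of strict monotonicity to its extended version, so your Gershgorin/diagonal-dominance plan is a plausible way to fill in the same deferred step, and your Weierstrass existence argument is cleaner than the paper's bare assertion that the optimum lies in the interior.

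The one place your plan as written would break down is the domain on which you pose the VI. You propose to verify strict monotonicity of $-\nabla\Pi$ on the entire box $\prod_{i}[0,\overline p]$, but Theorem~6 only controls the sign of $\partial^2\Pi/\partial p_i^2$ in the regime $\delta_{ij}\le 0$ and under condition~(\ref{Eq:Condition4}) (which itself fails near $p_i=0$); where $\delta_{ij}>0$ the profit is merely shown to be decreasing in $p_i$, the diagonal Hessian entries are not sign-controlled, and diagonal dominance has nothing to bite on. Indeed, strict monotonicity of $-\nabla\Pi$ on the whole box is equivalent to strict concavity of $\Pi$ on the whole box, which would render the case split in Theorem~6 pointless. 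The repair --- and what the paper actually does --- is to first use the ``decreasing when $\delta_{ij}>0$'' half of Theorem~6 to argue that any maximizer must lie in the set ${\cal K}=\{\mathbf{p}: \delta_{ij}\le 0,\ \forall i\}$, and then pose the VI and check strict monotonicity only on ${\cal K}$, where the concavity estimates are available; you should also confirm that ${\cal K}$ is nonempty, closed and convex before invoking the VI uniqueness theorem, since strict monotonicity alone yields only \emph{at most} one solution and existence still rests on continuity plus compactness. Your contraction-mapping fallback faces the same localization issue: the coordinate best response is single valued and well behaved only once the iterates are confined to the concave region.
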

\begin{proof}[Sketch of Proof]
From Theorem~6, we conclude that
when $\Pi({\bf p})$ is concave on $p_i$, $p_i$ needs to be smaller than a certain threshold, and $\Pi({\bf p})$ is decreasing on $p_i$ when $p_i$ is larger than this threshold. Therefore, it can be concluded that the optimal value of profit of the ESP, i.e., $\Pi^*({\bf p})$ is achieved in the concave parts when ${\delta _{ij}} \le 0$. Apparaently, the maximization of profit $\Pi({\bf p})$ is achieved either in the boundary of domain area or in the local maximization point. Since we know that the optimal value of profit, i.e., $\Pi^*({\bf p})$ is achieved in the interior area, and thus $\bf p^*$ exists. In the following, we prove that there exists at most one optimal solution by using Variational Inequality theory~\cite{scutari2010convex}.

Let the set ${\cal K}=\big \{ {\bf p} = [p_1, \ldots, p_N]^{\top}\big| {\delta _{ij}} \le 0, \forall i \in \cal N \big\}$. 
Then, we formulate an equivalent problem to~(\ref{Eq:Profit3}) as follows:
\begin{equation}\label{Eq:optimizationtoVI}
\begin{footnotesize}
\begin{aligned}
& \underset{\bf p>0}{\text{minimize}}
& & -\Pi({\bf p}) \\
& \text{subject to}
& & {\bf p} \in {\cal K}.\\
\end{aligned}
\end{footnotesize}
\end{equation}
Let $F({\bf p})=\nabla \left( { - \Pi ({\bf{p}})} \right) = - {\left[ {{\nabla _{{p_i}}}\Pi } \right]^{\top}_{i \in \cal N}}$. Accordingly, the optimization problem in~(\ref{Eq:optimizationtoVI}) corresponds to finding a point set ${\bf p^*}\in \cal K$, such that $({\bf p} - {\bf p^*})F({\bf p^*})\ge 0 ,\forall {\bf p} \in \cal K$, which is the Variational Inequality (VI) problem: VI$({\cal K}, F)$. Next, we mainly prove that $F$ is strictly monotone on $\cal K$ and continuous, and thus it can be concluded that VI$({\cal K}, F)$ has at most one solution according to~\cite{scutari2010convex}. Thus, the equivalent problem admits at most one solution, from which the uniqueness of the optimal solution, i.e., the Stackelberg equilibrium, follows. The detailed proof is given in~\cite{xiong2017blockchain} for the space limit.
\end{proof}

Therefore, we can apply the low-complexity gradient based algorithm to achieve the maximized profit $\Pi({\bf p})$ of the ESP. Similar to that in Section~\ref{SubSec:Uniform}, we adopt the best response dynamics algorithm to obtain the unique Stackelberg equilibrium, under which the ESP achieves the profit maximization according to Theorem~7. 

\section{Performance Evaluation}\label{Sec:Simulation}

\begin{figure}[t]
\centering
\includegraphics[width=.352\textwidth]{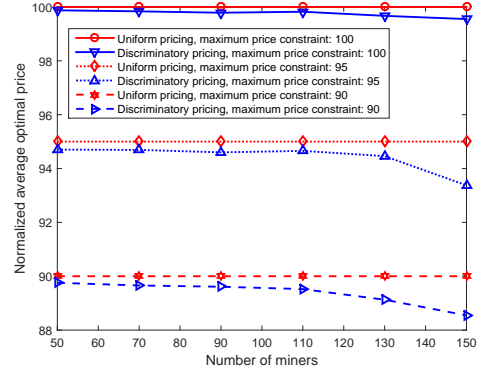}
\caption{Normalized average optimal price versus the number of miners.}\label{Fig:price_comparison}
\end{figure}
In this section, we conduct the numerical simulations to evaluate the performance of our proposed optimal pricing-based resource management in mobile blockchain. We consider a group of $N$ mobile users, i.e., miners in mobile blockchain mining for the reward and further assume that the size of a block to be mined by miner $i$ follows the normal distribution ${\cal N}(\mu_t, \sigma^2)$. The default parameters are set as follows: $\underline x = 10^{-2}$, $\overline x = 100$, $N=100$, $\overline p =100$, $\mu_t = 200$, $\sigma^2 = 5$, $R = 10^4$, $r = 20$, $z = 5 \times 10^{-3}$ and $c=10^{-3}$. 
\begin{figure*}
\begin{minipage}[t]{0.50\textwidth}
\centering
\includegraphics[width=2.2in]{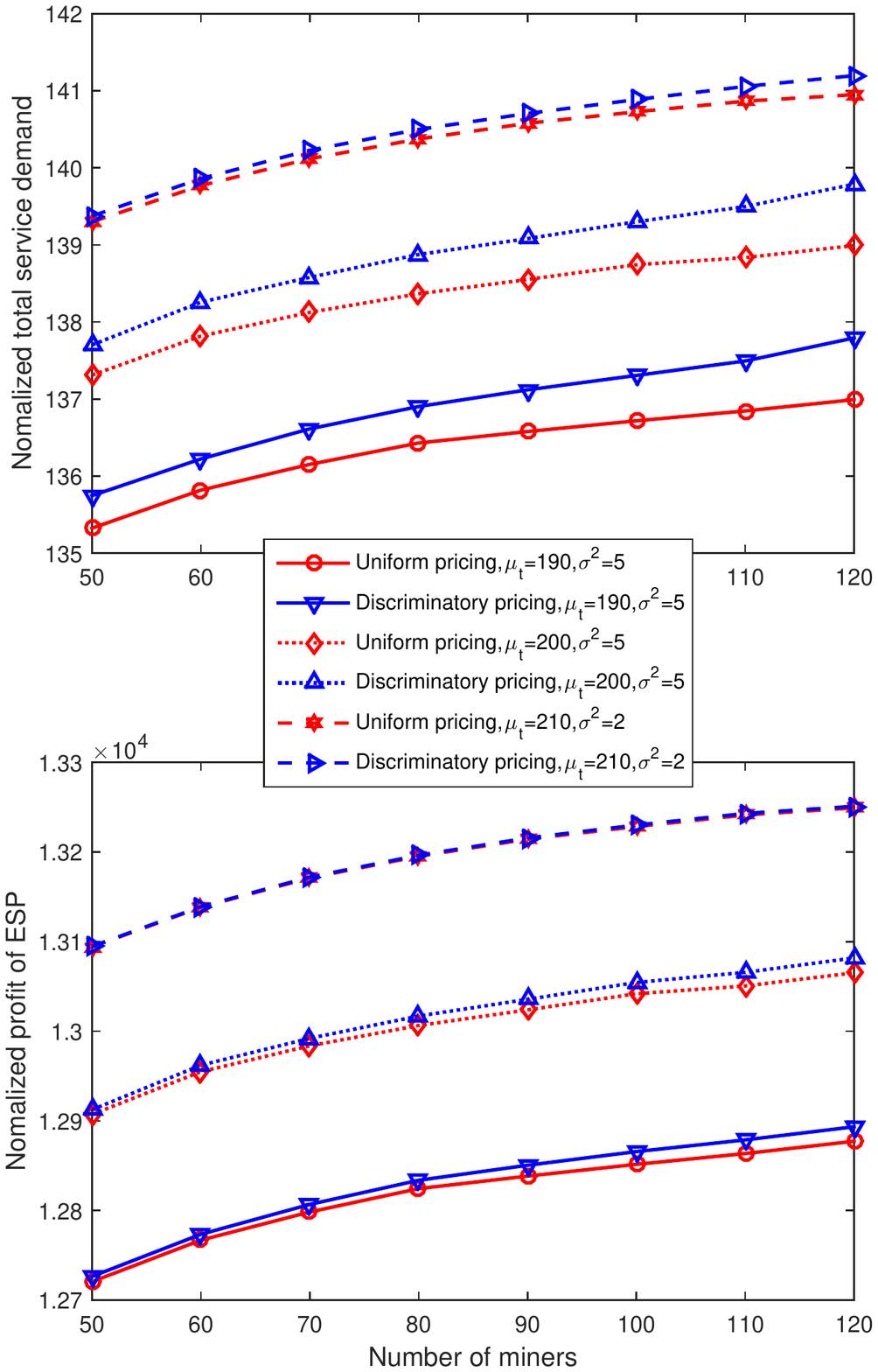}
\caption{Normalized total service demand of miners and the profit of the ESP versus the number of miners.}\label{Fig:utility_profit_number}
\end{minipage}
\begin{minipage}[t]{0.50\textwidth}
\centering
\includegraphics[width=2.2in]{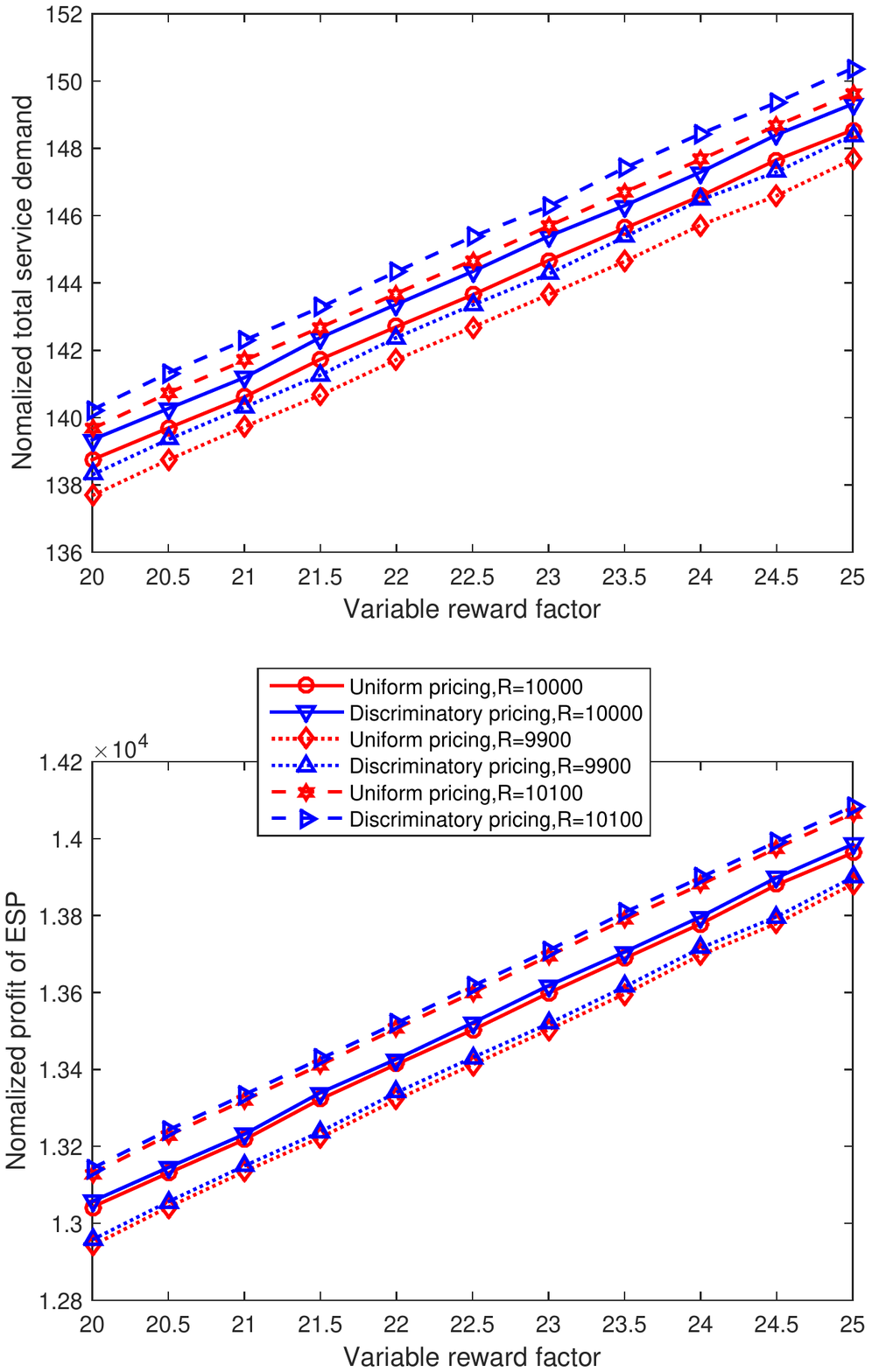}
\caption{Normalized total service demand of miners and the profit of the ESP versus the variable reward factor.}\label{Fig:utility_profit_reward}
\end{minipage}
\end{figure*}

We first evaluate the comparison of proposed uniform pricing and discriminatory pricing schemes. Figure.~\ref{Fig:price_comparison} depicts the comparison of the normalized average optimal price under two proposed pricing schemes. It is found that the optimal price under uniform pricing is the same as the maximum price, which is in line with~(\ref{Eq:FirstOrderProfit}). Specifically, the expression in~(\ref{Eq:FirstOrderProfit}) is always positive, and thus the profit of the ESP increases with the increase of price. In other words, the maximum price is the optimal value for profit maximization of the ESP under uniform pricing. 
Further, we find that the average optimal price of discriminatory pricing is slightly lower than that of uniform pricing. The intuitive reason is that, the ESP can set different unit prices of service for different miners under discriminatory pricing scheme. Therefore, the ESP can significantly encourage the higher total service demand from miners and achieve greater profit, which is also in line with the following results. As shown in Figs.~\ref{Fig:utility_profit_number} and~\ref{Fig:utility_profit_reward}, the total service demand from miners and the profit of the ESP under the uniform pricing scheme is slightly lower than that under the discriminatory pricing scheme in all cases. 
From Fig.~\ref{Fig:utility_profit_number}, we find that as $\sigma^2$ decreases, the results under uniform pricing scheme approach to those under discriminatory pricing more. This is because the heterogeneity of miners in blockchain is reduced as $\sigma^2$ decreases. We may consider one symmetric case, where the miners are homogeneous with the same size of blocks to mine, i.e., $\sigma^2 = 0$. In this case, the discriminatory pricing scheme yields the same results as those of the uniform pricing scheme.

We next evaluate the impacts of the number of miners. From Fig.~\ref{Fig:utility_profit_number}, we find that the total service demand of miners and the profit of the ESP increase with the increase of the total number of miners in mobile blockchain. This is because adding more miners will intensify the competition among the miners, which potentially motivates them to have higher service demand. Further, the coming miners have their service demand, and thus the total service demand from miners is increased. In turn, the ESP extracts more surplus from miners and thus has greater profit. In addition, it is found that the rate of service demand increment decreases as the total number of miners increases. This is because the incentive of miners to increase their service demand is weakened because the probability of their successful mining is reduced when the total number of miners is increasing. It is also found that with the increase of $\mu_t$, the total service demand of miners and the profit of the ESP increase. This is because as $\mu_t$ increases, i.e., the average size of one block becomes larger, the variable reward for each miner also increases. The potential incentive of miners to increase their service demand is improved, and thus the total service demand of miners increases. Accordingly, the ESP achieves greater profit.

At last, we examine the impacts brought by the variable reward and fixed reward, which are shown in Fig.~\ref{Fig:utility_profit_reward}. We find that both the total service demand of miners and the profit of the ESP increase with the increase of variable reward factor. This is because the increased variable reward enhances the motivation of miners for higher service demand, and the total service demand is improved accordingly. Correspondingly, the ESP achieves greater profit. Further, by comparing curves with different values of fixed reward, it is also found that as the fixed reward increases, the total service demand of miners and the profit of the ESP also increase. Similarly, this is because the increased fixed reward generates greater incentive of miners, which in turn improves the total service demand of miners and the profit of the ESP.

\begin{figure*}[ht]\scriptsize
\begin{eqnarray}\label{Eq:Monotonicity}
{\cal F}_i({\bf{x}}') - {\cal F}_i({\bf{x}}) &=& \sqrt {\frac{{R + r{t_i}\sum\nolimits_{i \ne j} {{x_j^\prime} } }}{{p{e^{ - \lambda z{t_i}}}}}} - {\sum\nolimits_{i \ne j} {{x_j^\prime}} } - \sqrt {\frac{{(R + r{t_i})\sum\nolimits_{i \ne j} {{x_j}} }}{{p{e^{ - \lambda z{t_i}}}}}} - \sum\nolimits_{i \ne j} {{x_j}} \nonumber \\ &=& \sqrt {\frac{{R + r{t_i}}}{{p{e^{ - \lambda z{t_i}}}}}} \left( {\sqrt {{{\sum\nolimits_{i \ne j} {{x_j^\prime}} } }} - \sqrt {\sum\nolimits_{i \ne j} {{x_j}} } } \right)- \left( {{{\sum\nolimits_{i \ne j} {{x_j^\prime}} } } - \sum\nolimits_{i \ne j} {{x_j}} } \right)\nonumber \\ &=& \left( {\sqrt {\frac{{R + r{t_i}}}{{p{e^{ - \lambda z{t_i}}}}}} - \sqrt {{{\sum\nolimits_{i \ne j} {{x_j^\prime}} } }} - \sqrt {\sum\nolimits_{i \ne j} {{x_j}} } } \right) \left( {\sqrt {{{\sum\nolimits_{i \ne j} {{x_j^\prime}} } }} - \sqrt {\sum\nolimits_{i \ne j} {{x_j}} } } \right).
\end{eqnarray}
\begin{eqnarray}\label{Eq:Scalability}
\lambda {\cal F}_i({\bf x}) - {\cal F}_i(\lambda {\bf x}) &=& \lambda \sqrt {\frac{{(R + r{t_i})\sum\nolimits_{i \ne j} {{x_j}} }}{{p{e^{ - \lambda z{t_i}}}}}} - \lambda \sum\nolimits_{i \ne j} {{x_j}} - \sqrt {\frac{{(R + r{t_i})\sum\nolimits_{i \ne j} {\lambda {x_j}} }}{{p{e^{ - \lambda z{t_i}}}}}} - \sum\nolimits_{i \ne j} {\lambda {x_j}}\nonumber \\
 &=& \left( {\lambda - \sqrt \lambda } \right)\sqrt {\frac{{(R + r{t_i})\sum\nolimits_{i \ne j} {{x_j}} }}{{p{e^{ - \lambda z{t_i}}}}}} > 0, \forall \lambda>1.
\end{eqnarray}
\hrulefill
\end{figure*}

\section{Conclusion}\label{Sec:Conclusion}

In this paper, we have addressed the optimal pricing-based edge computing resource management, in order to support offloading for mobile blockchain mining. In particular, we have formulated the Stackelberg game model to jointly study the profit maximization of edge computing service provider and the utility maximization of miners. In the model, we have derived the unique Nash equilibrium point of the game among the miners through backward induction. Further, we have analyzed the resource management including the uniform and discriminatory pricing schemes for the edge computing service provider. We have proved the existence and uniqueness of the Stackelberg equilibrium analytically for both pricing schemes. Additionally, we have conducted the simulations to evaluate the network performance, which may guide the edge computing service provider to achieve optimal resource management.
\appendix
\textbf{Proof of Theorem 2:}
\begin{proof}
Based on the first order derivative condition in~(\ref{Eq:FirstOrderUtility}), we obtain the best response function of miner $i$, as shown in~(\ref{Eq:BestResponse}). The uniqueness of the Nash equilibrium can be proved provided that the best response function of miner $i$, i.e., as given in~(\ref{Eq:BestResponse}), is the standard function~\cite{han2012game}.

Firstly, for the positivity, under the condition in~(\ref{Eq:Condition}), we have (from Lemma~1 given in~\cite{xiong2017blockchain}) $\sum\nolimits_{i \ne j} {{x_j}} < \frac{{R + r{t_i}}}{{p{e^{ - \lambda z{t_i}}}}} < \frac{{R + r{t_i}}}{{4p{e^{ - \lambda z{t_i}}}}}$, then we can conclude that $\sum\nolimits_{i \ne j} {{x_j}} < \sqrt {\frac{{R + r{t_i}\sum\nolimits_{i \ne j} {{x_j}} }}{{p{e^{ - \lambda z{t_i}}}}}}$. Thus, we can prove that ${\mathscr F}_i({\bf x}) = \sqrt {\frac{{R + r{t_i}\sum\nolimits_{i \ne j} {{x_j}} }}{{p{e^{ - \lambda z{t_i}}}}}} - \sum\nolimits_{i \ne j} {{x_j}} > 0$, which is the positivity condition.

Secondly, we show the monotonicity of~(\ref{Eq:BestResponse}). Let $\bf x' > x$, we can further simplify the expression of ${\cal F}_i({\bf{x}}') - {\cal F}_i({\bf{x}})$, which is shown in~(\ref{Eq:Monotonicity}). In particular, we have ${\sqrt {{{\sum\nolimits_{i \ne j} {{x_j^\prime}} } }} - \sqrt {\sum\nolimits_{i \ne j} {{x_j}} } }>0$, and we can easily verify that $ \sqrt {\frac{{R + r{t_i}}}{{p{e^{ - \lambda z{t_i}}}}}} - \sqrt {{{\sum\nolimits_{i \ne j} {{x_j^\prime}} } }} - \sqrt {\sum\nolimits_{i \ne j} {{x_j}} } \in\left( {\sqrt {\frac{{R + r{t_i}}}{{p{e^{ - \lambda z{t_i}}}}}} - 2\sqrt {{{\sum\nolimits_{i \ne j} {{x_j^\prime}} } }} ,\sqrt {\frac{{R + r{t_i}}}{{p{e^{ - \lambda z{t_i}}}}}} - 2\sqrt {\sum\nolimits_{i \ne j} {{x_j}} } } \right)$. Given Lemma~1 given in~\cite{xiong2017blockchain}, we can prove that ${\sqrt {\frac{{R + r{t_i}}}{{p{e^{ - \lambda z{t_i}}}}}} - 2\sqrt {\sum\nolimits_{i \ne j} {{x_j}}}}>0, \forall x_j$. Thus, the best response function of miner $i$ in~(\ref{Eq:BestResponse}) is always positive.

At last, we need to prove that $\lambda \mathscr F(x) > \mathscr F(\lambda x)$, for $\lambda >1$ for the scalability, and the steps of proof is shown in~(\ref{Eq:Scalability}). So far, we have proved that the best response function in~(\ref{Eq:BestResponse}) satisfies three properties of standard function described in~\cite{han2012game}. Therefore, the Nash equilibrium of MDG~$\mathcal{G}^u= \{\mathcal{N},\{x_i\}_{i \in \mathcal{N}},\{u_i\}_{i \in \mathcal{N}}\}$ is unique. The proof is completed.
\end{proof}
\textbf{Proof of Theorem 3:}
\begin{proof}[Sketch of proof]
According to~(\ref{Eq:FirstOrderUtility}), for each miner $i$, we have the mathematical expression $\frac{{\sum\nolimits_{i \ne j} {{x_j}} }}{{{{\left( {\sum\nolimits_{j\in \cal N}
 {{x_j}} } \right)}^2}}} = \frac{{p{e^{ - \lambda z{t_i}}}}}{{R + r{t_i}}}$. Then, we calculate the summation of this expression for all the miners as $\frac{{(N - 1)\sum\nolimits_{j \in \cal N} {{x_j}} }}{{{{\left( {\sum\nolimits_{j\in \cal N} {{x_j}} } \right)}^2}}} = \sum\nolimits_{i\in \cal N} {\frac{{p{e^{ - \lambda z{t_i}}}}}{{R + r{t_i}}}}$, which means $\frac{{(N - 1)}}{{\sum\nolimits_{j\in \cal N} {{x_j}} }} = \sum\nolimits_{i \in \cal N} {\frac{{p{e^{ - \lambda z{t_i}}}}}{{R + r{t_i}}}}$. Thus, we have $\sum\nolimits_{j \in \cal N} {{x_j}} = \frac{{N - 1}}{{\sum\nolimits_{i \in \cal N} {\frac{{p{e^{ - \lambda z{t_i}}}}}{{R + r{t_i}}}} }}$. Recall from~(\ref{Eq:BestResponse}), according to the first order derivative condition, we have $\sum\nolimits_{j \in \cal N} {{x_j}} = \sqrt {\frac{{(R + r{t_i})\sum\nolimits_{i \ne j} {{x_j}} }}{{p{e^{ - \lambda z{t_i}}}}}}$. With simple transformations, we obtain the Nash equilibrium for miner $i$ as shown in~(\ref{Eq:MDGequilibruim}). The detailed proof is given in~\cite{xiong2017blockchain} for the space limit.
\end{proof}
\bibliography{bibfile}

\begin{thebibliography}{10}

\bibitem{Bitcoin}
``Bitcoin: A peer-to-peer electronic cash system,'' Available at:
  http://bitcoin.org/bitcoin.pdf, 2008.

\bibitem{coindesk}
``Crypto-currency market capitalizations,'' Available at:
  https://coinmarketcap.com, 2017.

\bibitem{pass2017fruitchains}
R.~Pass and E.~Shi,
\newblock ``Fruitchains: A fair blockchain,''
\newblock in {\em Proceedings of the ACM Symposium on Principles of Distributed
  Computing}, Egham, UK, July 2017.

\bibitem{conti2017survey}
M.~Conti, C.~Lal, S.~Ruj  et~al.,
\newblock ``A survey on security and privacy issues of bitcoin,''
\newblock {\em arXiv preprint arXiv:1706.00916}, 2017.

\bibitem{M-commerce}
``Global m-commerce market 2016-2020,'' Available at:
  https://www.technavio.com/report/global-media-and-entertainment-servicesglobal-m-commerce-market-2016-2020,
  2016.

\bibitem{Zhang2016}
Y.~Zhang, L.~Liu, Y.~Gu, D.~Niyato, M.~Pan  and Z.~Han,
\newblock ``Offloading in software defined network at edge with information
  asymmetry: A contract theoretical approach,''
\newblock {\em Journal of Signal Processing Systems}, vol. 83, no. 2, pp.
  241--253, May 2016.

\bibitem{suankaewmanee20xx}
K.~Suankaewmanee, D.~T. Hoang, D.~Niyato, S.~Sawadsitang, P.~Wang  and Z.~Han,
\newblock ``Performance analysis and application of mobile blockchain,''
\newblock in {\em International Conference on Computing, Networking and
  Communications (ICNC)}, Maui, HA, March 2018.

\bibitem{houy2014bitcoin}
N.~Houy,
\newblock ``The bitcoin mining game,''
\newblock {\em Browser Download This Paper, Available at SSRN:
  https://ssrn.com/abstract=2407834}, March 2014.

\bibitem{houy2014economics}
N.~Houy,
\newblock ``The economics of bitcoin transaction fees,''
\newblock {\em Available at SSRN: https://ssrn.com/abstract=2400519}, February
  2014.

\bibitem{Approximation}
``Orphan probablity approximation,''
  https://gist.github.com/gavinandres-\\en/5044482.

\bibitem{decker2013information}
C.~Decker and R.~Wattenhofer,
\newblock ``Information propagation in the bitcoin network,''
\newblock in {\em Proceedings of IEEE P2P}, Trento, Italy, September 2013.

\bibitem{andrew2015exam}
S.~Andrew,
\newblock ``An examination of single-transaction blocks and their effect on
  network throughput and block size,''
\newblock 2015.

\bibitem{han2012game}
Z.~Han, D.~Niyato, W.~Saad, T.~Baar  and A.~Hjrungnes,
\newblock {\em Game theory in wireless and communication networks: theory,
  models, and applications},
\newblock Cambridge University Press, 2012.

\bibitem{xiong2017blockchain}
Z.~Xiong, S.~Feng, D.~Niyato, P.~Wang  and Z.~Han,
\newblock ``Edge computing resource management and pricing for mobile
  blockchain,''
\newblock {\em arXiv preprint arXiv:1710.01567}, 2017.

\bibitem{scutari2010convex}
G.~Scutari, D.~P. Palomar, F.~Facchinei  and J.-S. Pang,
\newblock ``Convex optimization, game theory, and variational inequality
  theory,''
\newblock {\em IEEE Signal Processing Magazine}, vol. 27, no. 3, pp. 35--49,
  April 2010.

\end{thebibliography}

\end{document}